\theoremstyle{definition}
\newtheorem{definition}{Definition}[section]
\theoremstyle{plain}
\newtheorem{theorem}[definition]{Theorem}
\newtheorem{lemma}[definition]{Lemma}
\newtheorem{corollary}[definition]{Corollary}
\definecolor{MyBlue}{RGB}{50,100,200}
\begin{document}

\title{Simple and tight complexity lower bounds for solving Rabin games\thanks{This
work is a part of projects CUTACOMBS (Ma. Pilipczuk) and BOBR (Mi. Pilipczuk) 
%, and VAMOS (K. S. Thejaswini)
that have received funding from the European Research Council (ERC) under the European Union’s Horizon 2020 research
and innovation programme, grant agreements No 714704 and 948057
%, and 101020093, 
respectively.
Ma. Pilipczuk is also partially supported by Polish National Science Centre SONATA BIS-12 grant number 2022/46/E/ST6/00143.}}
\author[1]{Antonio Casares}
\author[2]{Marcin Pilipczuk}
\author[2]{Michał Pilipczuk}
\author[3]{U\'everton S. Souza}
\author[4,5]{{K.~S.~Thejaswini}}
\affil[1]{LaBRI, Université de Bordeaux, France}
\affil[2]{University of Warsaw, Poland}
\affil[3]{Universidade Federal Fluminense, Niter\'{o}i, Brazil}
\affil[4]{University of Warwick, United Kingdom}
\affil[5]{Institute of Science and Technology, Austria}
\setcounter{Maxaffil}{0}
\renewcommand\Affilfont{\itshape\small}

\date{}

\maketitle
\renewcommand{\leq}{\leqslant}
\renewcommand{\geq}{\geqslant}
\renewcommand{\le}{\leqslant}
\renewcommand{\ge}{\geqslant}

\newcommand{\Bc}{\mathcal{B}}
\newcommand{\Cc}{\mathcal{C}}
\newcommand{\Dc}{\mathcal{D}}
\newcommand{\Fc}{\mathcal{F}}
\newcommand{\Gc}{G}
\newcommand{\Hc}{\mathcal{H}}
\newcommand{\Lc}{\mathcal{L}}
\newcommand{\Mc}{\mathcal{M}}
\newcommand{\Rc}{\mathcal{R}}
\newcommand{\Tc}{\mathcal{T}}
\newcommand{\Uc}{\mathcal{U}}
\newcommand{\Vc}{\mathcal{V}}
\newcommand{\Wc}{\mathcal{W}}
\newcommand{\Even}{\mathrm{Even}}
\newcommand{\Odd}{\mathrm{Odd}}
\newcommand{\Reg}[2]{\Rc^{#1}\!\left(#2\right)}
\newcommand{\Def}[2]{\Dc^{#1}\!\left(#2\right)}
\newcommand{\floor}[1]{\left\lfloor #1 \right\rfloor}
\newcommand{\ceil}[1]{\left\lceil #1 \right\rceil}
\newcommand{\seq}[1]{\left\langle #1 \right\rangle}
\newcommand{\tpl}[1]{\left( #1 \right)}
\newcommand{\eset}[1]{\left\{\, #1 \,\right\}}
\newcommand{\Nats}{\mathbb{N}}
\newcommand{\mymax}{\mathrm{max}}
\newcommand{\infSeq}{\mathrm{Inf}}

\newcommand{\Steven}{Steven}
\newcommand{\Audrey}{Audrey}
\newcommand{\permsat}{4\text{-}\textsc{Permutation SAT}}
\newcommand{\SATThree}{3\text{-}\textsc{SAT}}
\newcommand{\kclique}{k\times k\text{-}\textsc{Clique}} 
\newcommand{\rabingame}{\textsc{Rabin Game}} 
\newcommand{\Oh}{\mathcal{O}}

\begin{abstract}
We give a simple proof that assuming the Exponential Time Hypothesis (ETH), determining the winner of a Rabin game cannot be done in time $2^{o(k \log k)} \cdot n^{\Oh(1)}$, where $k$ is the number of pairs of vertex subsets involved in the winning condition and $n$ is the vertex count of the game graph. While this result follows from the lower bounds provided by Calude et al [SIAM J. Comp. 2022], our reduction is considerably simpler and arguably provides more insight into the complexity of the problem. In fact, the analogous lower bounds discussed by Calude et al, for solving Muller games and multidimensional parity games, follow as simple corollaries of our approach. Our reduction also highlights the usefulness of a certain pivot problem --- {\sc{Permutation SAT}} --- which may be of independent interest.
\end{abstract}

\begin{textblock}{20}(0, 12.7)
\includegraphics[width=35px]{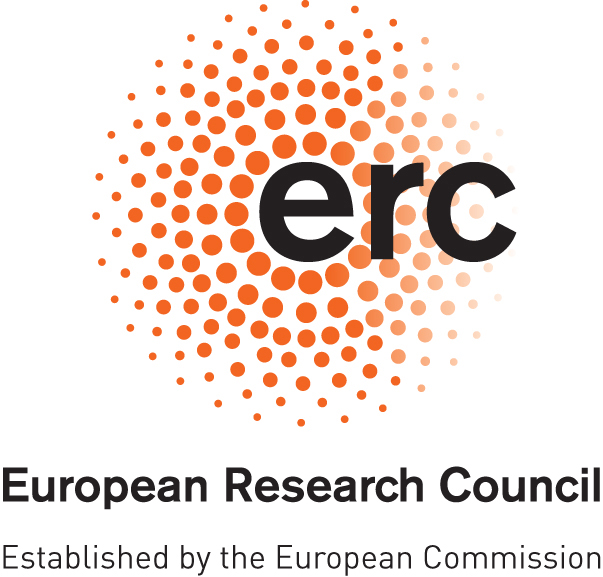}%
\end{textblock}
\begin{textblock}{20}(0, 13.6)
\includegraphics[width=35px]{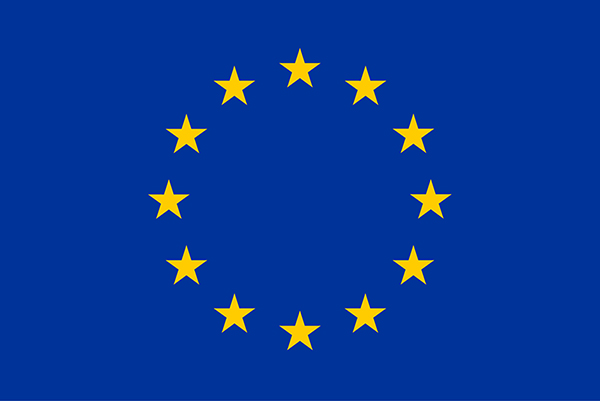}%
\end{textblock}

\section{Introduction}

We study Rabin games defined as follows. The arena of a Rabin game is a (finite) directed graph $D$ whose vertices are divided among the two players involved: Steven and Audrey\footnote{The right way to memorize the player names is St{\bf{even}} and {\bf{Odd}}rey; the naming comes from the context of parity games.}. There is an initial vertex $u_1$ on which a token is initially placed. The game proceeds in turns. Each turn, the player controlling the vertex~$u$ on which the token is currently placed chooses any outneighbour $v$ of $u$ and moves the token from $u$ to~$v$. Thus, by moving the token, the players construct an infinite walk $\rho=(u_1,u_2,u_3,\ldots)$ in $D$, called a {\em{play}}. To determine the winner, the play $\rho$ is compared against the winning condition consisting of $k$ pairs of vertex subsets $(G_1,B_1),(G_2,B_2),\ldots,(G_k,B_k)$ as follows: Steven wins if there exists $i\in \{1,\ldots,k\}$ such that along $\rho$, $G_i$ is visited infinitely often while $B_i$ is visited only a finite number of times; Audrey wins otherwise. The computational question associated with the game is to determine which player has a winning strategy.

Rabin conditions were first introduced by Rabin in his proof of decidability of S2S (monadic second order with two successors)~\cite{Rab69}. They also naturally appear in the determinization of B\"uchi automata~\cite{Saf88,Pit06,Sch09}, a key step in the synthesis problem for reactive systems with specifications given in Linear Temporal Logic. Since then, algorithms for solving Rabin games have been extensively studied~\cite{EJ99,KV98,PP06,BMMSS22}. 
%\antonio{More context on Rabin conditions in comment}
%Rabin games  have been extensively studied in the area of formal verification in the connection with the synthesis problem for reactive systems; see~\cite{EJ99,PP06}. 
They generalise the more well-known {\em{parity games}}, which differ by altering the winning condition as follows. Each vertex of the graph bears a {\em{colour}}, which is an integer from $\{1,\ldots,k\}$.
Steven wins a play $\rho$ if the largest colour seen infinitely often in $\rho$ is even, and otherwise Audrey wins. Indeed, to reduce a parity game with colours $\{1,\ldots,k\}$ to a Rabin game with $\lfloor k/2\rfloor$ pairs in the winning condition, it suffices to take the same graph $D$ and set 
$$G_i=\{\textrm{vertices with colours }\geq 2i\}\qquad \textrm{ and }\qquad B_i=\{\textrm{vertices with colours }\geq 2i+1\},$$
for all $i\in \{1,\ldots,\lfloor k/2\rfloor\}$. Further, both parity games and Rabin games are generalised by {\em{Muller games}}, where again vertices have colours from $\{1,\ldots,k\}$ (each vertex may bear multiple colours), and the winning condition is defined by simply providing a family $\mathcal{F}$ of subsets of $\{1,\ldots,k\}$ that are winning for Steven in the following sense: Steven wins a play $\rho$ if the set of colours seen infinitely often in $\rho$ belongs to $\mathcal{F}$.

In a breakthrough paper, Calude, Jain, Khoussainov, Li, and Stephan~\cite{CJKLS22} proved that solving all the three games discussed above is fixed-parameter tractable when parameterised by $k$ (the number of colours, respectively the number of pairs in the winning condition).
More precisely, determining the winner of the game can be done in $k^{\Oh(k)}\cdot  n^{\Oh(1)}$ time, where $n$ is the number of vertices of the arena. The recent work of Majumdark, Sa\v{g}lam and Thejaswini~\cite{MST23} provides a more precise analysis which results in an algorithm solving Rabin games in polynomial space and time $k!^{1+o(1)}\cdot nm$, where $m$ is the number of~edges. While the work of Calude et al. also provided a quasipolynomial-time algorithm to solve parity games, it is known that solving Rabin games is already ${\mathsf{NP}}$-complete~\cite{EJ88,EJ99}, while solving Muller games is ${\mathsf{PSPACE}}$-complete~\cite{HD05}. Hence, for those games, the existence of (quasi)polynomial-time algorithms is~unlikely.

In their work, Calude et al.~\cite{CJKLS22} provided also complexity lower bounds based on the Exponential Time Hypothesis (ETH, the assumption that there exists $\delta>0$ such that {\sc{3SAT}} problem cannot be solved in time $\Oh(2^{\delta n})$) for some of the games discussed above. They proved that assuming ETH, there are no algorithms with running time $2^{o(k\log k)}\cdot n^{\Oh(1)}$ for solving Muller games with priorities in $\{1,\ldots,k\}$ or $d$-dimensional $k$-parity games (see preliminaries for a definition of the latter variant). Since every $k$-dimensional parity game can be reduced in polynomial time to a Rabin game with $k$ pairs in the winning condition (see~\cite{CHP07}), one can also derive, as a corollary, the same lower bound for solving Rabin games. The reduction provided by Calude et al. starts with the {\sc{Dominating Set}} problem and is rather involved.

\paragraph*{Our contribution.} We provide a simple reduction that reproves the tight complexity lower bound for solving Rabin games that follows from the work of Calude et al. More precisely, we prove that assuming ETH, there is no algorithm for this problem with running time $2^{o(k\log k)}\cdot n^{\Oh(1)}$. The same lower bound for (the more general) Muller games follows as a direct corollary. By a minor twist of our construction, we can also reprove the lower bound for $k$-dimensional parity games reported by Calude et al. 

We believe that our reduction is significantly simpler and more transparent than that of Calude et al. but more importantly, it gives a better insight into the origin of the $2^{o(k\log k)}$ factor in the complexity of the problem. Analyzing the algorithms of~\cite{CJKLS22,MST23}, this factor stems from considering all possible permutations of the $k$ pairs of vertex subsets involved in the winning condition. In our reduction, those permutations form the space of potential solutions of a carefully chosen pivot problem --- {\sc{Permutation SAT}}, a special case of a temporal constraint satisfaction problem --- which we discuss below.

\paragraph*{Temporal CSPs and Permutation SAT.} 
A constraint satisfaction problem (CSP) is the problem of deciding if there exists a variable assignment
that satisfies a given set of constraints. \emph{Temporal problems} is a rich family of CSPs that model planning
various events on a timeline.
In a basic form, every variable corresponds to an event that needs to be scheduled at some point of time 
and constraints speak about some events being in specific order (e.g., one preceding another), at the same time,
or at different times. 
This is usually modeled with $\mathbb{Q}$ as the domain and constraints having access to predicates $<$, $\leq$, $=$, and $\neq$. A $\mathsf{P}$ vs $\mathsf{NP}$ dichotomy for finite languages within this formalism has been provided
by Bodirsky and K\'{a}ra~\cite{BodirskyK10}.

An instance of such a temporal CSP with $k$ variables and $n$ constraints can be solved in time
$k^k \cdot (k+n)^{\Oh(1)}$ as follows: since variables are accessed only via comparisons $<$, $\leq$, $=$, and $\neq$, 
without loss of generality one can restrict to assignments with values in $\{1,2,\ldots,k\}$,
and there are $k^k$ such assignments that can be all checked. 
An interesting and challenging question is:
For which languages this running time can be significantly improved?

In this paper, we focus in a particular temporal CSP: \textsc{Permutation SAT}. An instance of this problem is given by a boolean combination of literals of the form $x_1 < x_2 < \ldots < x_{\alpha}$; a solution for it is an assignment of variables to integers making it a valid formula. We say that such a problem is an instance of $(\alpha,\beta)$-\textsc{Permutation SAT} if its constraints use at most $\beta$ literals, and each of these literals involves at most $\alpha$ variables.
Observe that, without loss of generality, in \textsc{Permutation SAT} one can restrict attention to assignments
being surjective functions from variables $\{x_1,\ldots,x_k\}$ to $\{1,\ldots,k\}$, which can be interpreted as permutations
of $\{1,\ldots,k\}$; this justifies the choice of the problem name and yields a brute-force algorithm with running time
$k! \cdot (k+n)^{\Oh(1)}$. 

Bonamy et al.~\cite{BonamyKNPSW18} proved that $(3,\infty)$-\textsc{Permutation SAT} admits 
no $2^{o(k \log k)} n^{\Oh(1)}$ algorithm unless the Exponential Time Hypothesis (ETH) fails. 
Our main technical contribution is a similar lower bound for $(2,4)$-\textsc{Permutation SAT} (Theorem~\ref{thm:permSAT-hardness}).
The proof of this result is a simple reduction from the {\sc{$k\times k$-Clique}} problem considered by Lokshtanov, Marx, and Saurabh~\cite{LMS18}. It is our belief that $(\alpha,\beta)$-{\sc{Permutation SAT}} is a problem with a very easy and robust formulation, hence its usefulness may extend beyond the application to Rabin games discussed in this work.

%Note that, without loss of generality, we can assume that $(\alpha,\beta)$-\textsc{Permutation SAT} asks for an assignment that is a surjective function from $\{x_1,x_2,\ldots,x_k\}$ to $\{1,2,\ldots,k\}$, which can be interpreted as a permutation of $[k]$; hence the name of the problem. This gives a brute-force algorithm with running time bound $k! \cdot (k+n)^{\Oh(1)}$.

\section{Preliminaries on games}

For a positive integer $p$, we denote $[p]\coloneqq \{1,\ldots,p\}$.

Rabin and Muller games are turn-based two-player games played on an \emph{arena} that is a directed graph $D = (V,E)$ together with a partition of the vertices into those owned by player \Steven{} and those owned by player \Audrey. 
A token is initially placed on a designated starting vertex $u_1$. In each consecutive turn, the owner of the vertex bearing the token moves the token along an edge of $D$. Thus, the players jointly form an infinite sequence of vertices in consecutive turns, referred to as a \emph{play}. An \emph{objective} is a representation of a subset of the set of all possible plays. We will consider three different objectives discussed below.

\paragraph{Muller objectives.} 
In a Muller game, each vertex is labelled with a subset of colours from $[k]$ via a mapping $c\colon V\to 2^{[k]}$, where $V$ is the set of vertices of the arena $D$. The Muller objective is specified by a family of subsets of colours $\Fc\subseteq 2^{[k]}$. A play $\rho$ is winning for \Steven{} if the set of colours visited infinitely often, belongs to $\Fc$, that is, if
\[ \bigcup_{v\in\infSeq(\rho)} c(v) \; \in  \Fc,\]
where $\infSeq(\rho)$ is the set of vertices appearing infinitely often in the play.

%The arena to describe a $k$-Muller game is has each vertex equipped with subsets of integers from  $\{1,\dots, k\}$. The Muller objective is captured by such a colouring along with a set of subsets of colours $\Fc\subseteq 2^{[k]}$. 
%A play satisfies the $k$-Muller objective for \Steven~if the union of these subsets of indices seen infinitely often along such a play is exactly an element of $\Fc$. If not, the play is winning for \Audrey.

\paragraph{Rabin objective.}
%Rabin objectives are specific kinds of Muller objectives. 
A Rabin objective of \emph{degree} $k$ consists of $k$ pairs of vertex subsets $(G_1,B_1),\ldots,(G_k,B_k)$; $G_i$ is said to be the \emph{good} subset for index $i$, and $B_i$ is the \emph{bad} subset. A play $\rho$ is winning for \Steven~if there exists an index $i\in \{1,\ldots,k\}$ such that $\rho$ visits $G_i$ infinitely often and $B_i$ only a finite number of times. 

Rabin objectives of degree $k$ can be encoded as a Muller objective using $2k$ colours. Indeed, for each $0\leq i<k$, we associate $2i$ with the subset $G_i$ and $2i+1$ with the subset $B_i$. We define $c\colon V\to 2^{[2k]}$ and~$\Fc$~as:
\[ c(v) = \{2i \colon v\in G_i\} \cup \{2i+1 \colon  v\in B_i\} \;\; \text{ and } \;\; \Fc = \{C\subseteq [2k] \colon \exists i\; 2i\in C \text{ and } 2i+1 \notin C\}. \]

\paragraph{Generalised parity objective.}
Generalised parity games were first considered in the work of Chaterjee, Henzinger, and Piterman~\cite{CHP07}. In a $d$-dimensional $k$-parity condition, each vertex is labelled with a $d$-dimensional vector of integers from $\{1,\ldots,k\}$.
An infinite play satisfies this objective for \Steven~if and only if there is some coordinate such that the highest number that occurs infinitely often at this coordinate is even.  \Audrey~wins otherwise.

%\antonioBox{Please, check this} 
%These games are at least as hard as Rabin games, as a $d$-dimensional $k$-parity game can be reduced to a Rabin game of degree $dk/2$ (as shown by \cite{CHP07}). 
These games are inter-reducible with Rabin games, as shown by \cite{CHP07}. For one direction, since a $d$-dimensional $k$-parity objective is a disjunction of $d$ distinct parity objectives, and  each parity objective can be expressed as a Rabin objective of degree $\lceil k/2\rceil$, the $d$-dimensional $k$-parity objective can therefore similarly be transformed into a Rabin objective of degree $d\lceil k/2\rceil$, with $\lceil k/2\rceil$ Rabin pairs for \emph{each} of the $d$ parity objectives. Conversely, a Rabin objective with $d$ pairs can be represented as a $d$-dimensional $3$-parity objective. Indeed, we use each pair $(G_i,B_i)$ to define the component $p_i$ that assigns colour 3 to $v$ when $v\in B_i$, colour 2 if $v\in G_i\setminus B_i$ and 1 otherwise.

% Assume that a $d$-dimensional $k$-parity objective is given by $d$ functions $p_i\colon V \to \{1,\dots, k\}$ assigning the $i^{th}$ component of the vectors. To define an equivalent Rabin objective, we consider pairs of subsets $(G_{i,j},B_{i,j})$ for $1\leq i \leq d$ and $0\leq j \leq \lfloor k/2 \rfloor$. We define 
% $$G_{i,j} = \{v\mid p_i(v) = 2j \} \qquad\text{ and } \qquad B_{i,j} = \{v\mid p_i(v) > 2j\}\}$$
% $v\in G_{i,j}$ if $p_i(v) = 2j$ and $v\in B_{i,j}$ if $p_i(v) = l$ for some odd $l< 2j$.
% Conversely, a Rabin objective with $d$ pairs can be represented as a $d$-dimensional $3$-parity objective. Indeed, we use each pair $(G_i,B_i)$ to define the component $p_i$ given by $p_i(v) = 3$ if $v\in B_i$, $p_i(v) = 2$ if $v\in G_i\setminus B_i$ and $p_i(v) = 1$ otherwise.

Calude et al.~\cite{CJKLS22} showed that generalised parity games cannot be solved in time $2^{o(k\log k)}\cdot n^{\Oh(1)}$ assuming the ETH, even when the dimensions is $d=2$. %Although the definition considered by Calude et al. is the dual of what we defined here (They consider \Audrey's objective  if \emph{for all} dimensions the highest priority is even), 

\paragraph{Strategies and winners.} For a given game, with any of the objectives discussed above,
 a \emph{strategy}  of \Steven~is a function from the set of plays ending at a \Steven~vertex to the set of vertices. 
 A play $v_0,v_1,\dots,v_i,\dots$ is said to respect this strategy if for every vertex $v_i$ which belongs to \Steven, the vertex $v_{i+1}$ is the one proposed by the strategy on the finite prefix of this play ending at $v_i$. 
For a fixed objective, a game is said to be winning for \Steven~if he has a strategy such that plays respecting this strategy satisfy the~objective.

\paragraph{Positional strategies.}  We say that a strategy (for \Steven{}) is \emph{positional} (or \emph{memoryless}) if it can be represented by a function assigning an outgoing edge to each vertex owned by Steven. That is, a positional strategy always makes the same decision over the same vertex, and this decision depends only on the current vertex and not on the history of the play. It is well known that Rabin games are positional for Steven in the following sense.

\begin{lemma}[\cite{EJ88,EJ99}]
Rabin games are positional for \Steven{}. That is, if \Steven{} wins a Rabin game, then he has a positional winning strategy.
\end{lemma}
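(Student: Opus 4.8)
I would prove this by induction, shrinking the arena (and, in a degenerate case, the number of Rabin pairs), relying on two standard tools: determinacy of $\omega$-regular games — so that \Steven{}'s winning region $W$ has a well-defined complement, namely \Audrey{}'s winning region — and the \emph{attractor} construction $\mathrm{Attr}_{\text{Steven}}(\cdot)$, $\mathrm{Attr}_{\text{Audrey}}(\cdot)$ together with its positional attractor strategies. The first, routine reduction is to restrict to the sub-game induced by $W$: if \Audrey{} could force the play out of $W$ she could force it into her own winning region and hence win from inside $W$, a contradiction, so $W$ is a trap for \Audrey; and every vertex reachable under a winning strategy of \Steven{} is again in $W$, so \Steven{} can keep the play inside $W$. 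Thus $D[W]$ is a well-defined arena, \Steven{} wins from each of its vertices, and a positional winning strategy on $D[W]$ is a positional winning strategy on $W$ in $D$. Hence it suffices to prove the statement when \Steven{} wins from \emph{every} vertex; I would induct on $|V(D)|+k$, the base case being an empty arena (equivalently, a single Rabin pair, which by the reduction recalled in the preliminaries is a parity condition with three priorities, hence positionally determined).

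So assume \Steven{} wins everywhere and $V(D)\neq\varnothing$, and fix one pair $(G_k,B_k)$, where we may assume $G_k\cap B_k=\varnothing$. Let $R := V(D)\setminus\mathrm{Attr}_{\text{Audrey}}(B_k)$; this induces a sub-arena, it contains no vertex of $B_k$, and it is a trap for \Audrey. Let $T\subseteq R$ be \Steven{}'s winning region in the B\"uchi game on $D[R]$ with target $G_k$ (B\"uchi games are positionally determined for the B\"uchi player). \emph{Case 1: $T\neq\varnothing$.} Put $P := \mathrm{Attr}_{\text{Steven}}(T)$. On $P$, \Steven{} plays the positional attractor strategy towards $T$ and then, from $T$ on, the positional B\"uchi strategy; since $T\subseteq R$ avoids $B_k$ and is a trap for \Audrey{} inside $D$, every resulting play visits $G_k$ infinitely often and $B_k$ never, and so satisfies pair $k$. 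The set $V(D)\setminus P$ induces a sub-arena — the only edges leaving it are \Audrey{}-owned, so deleting them cannot hurt \Steven{} — on whose strictly fewer vertices \Steven{} still wins everywhere; the induction hypothesis yields a positional winning strategy there, and gluing it with the strategy on $P$ along the partition $V(D)=P\sqcup(V(D)\setminus P)$ gives a positional winning strategy on all of $D$: a play staying outside $P$ is won by the inductive strategy, and one that enters $P$ can never leave it and is won via pair $k$.

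\emph{Case 2: $T=\varnothing$ for every choice of pair.} Then, for each $i$, \Audrey{} can falsify pair $i$ from every vertex: inside the corresponding $D[R_i]$ she forces only finitely many visits to $G_i$ (positionality of co-B\"uchi games), and from $\mathrm{Attr}_{\text{Audrey}}(B_i)$ she forces a visit to $B_i$; stitching these together she guarantees ``$B_i$ infinitely often, or $G_i$ eventually avoided forever''. One now wants to conclude that a pair with this property is \emph{useless} — that deleting it does not change the winning region — so that the induction hypothesis with $k-1$ pairs applies and its positional strategy is also winning for the original objective; equivalently, that the situation ``all the $T_i$ are empty, yet \Steven{} wins everywhere and $V(D)\neq\varnothing$'' is contradictory. \textbf{This reduction is the step I expect to be the main obstacle.} It is exactly here that the half-positionality is delicate: \Audrey{} plays the Streett side, which genuinely requires memory, so one cannot simply merge her pair-falsifying strategies into a single strategy falsifying all pairs; a rigorous argument must instead reason directly about the winning regions (using, e.g., finite-memory determinacy of $\omega$-regular games to tame \Audrey{}'s strategies), or else abandon the induction in favour of the signature/$\mu$-calculus approach of Emerson and Jutla — expressing \Steven{}'s winning region as a fixpoint formula with $\Oh(k)$ nested alternations and reading a positional strategy off the ordinal ranks witnessing the least fixpoints. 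In either route the only black boxes are determinacy of $\omega$-regular games, positionality of B\"uchi and co-B\"uchi games, and the routine attractor/trap manipulations used above.
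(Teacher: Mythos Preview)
The paper does not prove this lemma at all: it is stated with a citation to \cite{EJ88,EJ99} and used as a black box, so there is no ``paper's own proof'' to compare your proposal against.

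As for the proposal itself, the overall shape is the classical one, and your Case~1 is fine. But you correctly flag that Case~2 is the crux, and as written it is not a proof: you reach the point where, for every pair $i$, \Audrey{} can falsify pair~$i$ individually, and then you need that she can falsify all pairs simultaneously (or, equivalently, that some pair may be dropped without shrinking \Steven{}'s winning region). Saying ``use finite-memory determinacy to tame \Audrey{}'s strategies'' is not an argument --- there is no generic way to merge $k$ separate Streett-side strategies into one, and this is precisely where naive inductions on Rabin games break. Your fallback, ``abandon the induction in favour of the signature/$\mu$-calculus approach of Emerson and Jutla'', is not a completion of your proof but a switch to a different proof (indeed, the one the paper cites). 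So the proposal is a reasonable plan that names the right obstacle, but it does not clear it; to turn it into a self-contained proof you would need either the Emerson--Jutla rank argument in full, or a Zielonka-style induction that is structured differently from your Case~1/Case~2 split.
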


 \paragraph{Exponential Time Hypothesis.} The Exponential Time Hypothesis is a complexity assumption introduced by Impagliazzo, Paturi and Zane~\cite{ImpagliazzoPZ01} that postulates the following: there exists $\delta>0$ such that the \SATThree{} problem cannot be solved in time $\Oh(2^{\delta n})$, where $n$ is the number of variables of the input formula. We refer the reader to~\cite[Chapter~14]{platypus} for an introduction to the applications of ETH for lower bounds within parameterized complexity.

\section{Permutation SAT}\label{sec:permSAT}

Fix integers $\alpha \geq 2$ and $\beta \geq 1$ and let $X$ be a finite set of variables. An \emph{$\alpha$-literal} is a predicate of the form
$x_1 < x_2 < \ldots < x_{\alpha'}$ (being a shorthand for 
$(x_1 < x_2) \wedge (x_2 < x_3) \wedge \ldots \wedge (x_{\alpha'-1} < x_{\alpha'})$) for some $2 \leq \alpha' \leq \alpha$
and variables $x_1,x_2,\ldots,x_{\alpha'}$ belonging to $X$; a \emph{literal} is a $2$-literal (i.e., a predicate of the form $x_1 < x_2$).
An \emph{$(\alpha,\beta)$-clause} is a disjunction of at most $\beta$ $\alpha$-literals, and an \emph{$(\alpha,\beta)$-formula} is a conjunction of $(\alpha,\beta)$-clauses. By $\beta$-clauses and $\beta$-formulas we mean $(2,\beta)$-clauses and $(2,\beta)$-formulas, respectively.

If $\phi$ is a formula with variable set $X$, then for a permutation $\pi$ of $X$ we define the satisfaction of (literals and clauses of) $\phi$ by $\pi$ in the obvious manner. In the $(\alpha,\beta)$-\textsc{Permutation SAT} problem we are given an $(\alpha,\beta)$-formula $\phi$ and the task is to decide whether there exists a permutation of the variables of $\phi$ that satisfies $\phi$.
$\beta$-\textsc{Permutation SAT} is a shorthand for $(2,\beta)$-\textsc{Permutation SAT}.

%and \textsc{Permutation SAT} is a shorthand for $(2,\infty)$-\textsc{Permutation SAT}, that is, without any restriction on the size of the disjunction in clauses. 

In this section we prove the following hardness result.

\begin{theorem}\label{thm:permSAT-hardness}
 Assuming ETH, there is no algorithm for $4$-\textsc{Permutation SAT} that would work in time $2^{o(k\log k)}\cdot n^{\Oh(1)}$, where $k$ is the number of variables and $n$ is the number of clauses.
\end{theorem}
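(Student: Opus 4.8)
The plan is to reduce from the $k\times k$-Clique problem of Lokshtanov, Marx, and Saurabh: given an integer $k$ and a graph $H$ whose vertex set is $[k]\times[k]$, decide whether $H$ contains a clique with exactly one vertex in each row $\{i\}\times[k]$. Under ETH this problem has no $2^{o(k\log k)}\cdot |H|^{\Oh(1)}$ algorithm, and the parameter $k$ is precisely what we want to transfer to the number of variables of the $4$-\textsc{Permutation SAT} instance. The central idea is to encode, for each row $i\in[k]$, a single variable $x_i$ whose rank (i.e.\ its position $1,\ldots,k$ in the permutation) names the column $c_i\in[k]$ of the chosen vertex $(i,c_i)$ in that row. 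Thus a permutation $\pi$ of $\{x_1,\ldots,x_k\}$ corresponds bijectively to a tuple $(c_1,\ldots,c_k)\in[k]^k$ of chosen columns, one per row, and we must write $4$-clauses enforcing that the selected vertices are pairwise adjacent in $H$.

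The key step is to express the adjacency constraints. For each pair of distinct rows $i<j$ and each non-edge $\{(i,a),(j,b)\}$ of $H$ (including the case where one of the two vertices does not exist, though here every vertex of the grid is present), we must forbid the simultaneous choice $c_i=a$ and $c_j=b$. The obstacle is that ``$x_i$ has rank exactly $a$'' is not directly a literal; only strict inequalities between variables are allowed. The trick I would use is to introduce, besides the $k$ ``selector'' variables $x_1,\ldots,x_k$, a set of $k$ auxiliary ``threshold'' variables and constraints pinning down a reference total order, so that ``$c_i=a$'' becomes expressible as a short conjunction of literals comparing $x_i$ to two consecutive thresholds; dually, its negation ``$c_i\neq a$'' becomes a disjunction of at most two literals. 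Then the forbidden pattern $\neg(c_i=a\wedge c_j=b)$ rewrites as $(c_i\neq a)\vee(c_j\neq b)$, which is a disjunction of at most $2+2=4$ literals, i.e.\ exactly a $4$-clause. One must be slightly careful to also add clauses ensuring the auxiliary variables really do form the intended scaffold and that the $x_i$'s take distinct ranks among a prescribed block of positions — but all of this is expressible with $\Oh(k)$ clauses, each a $4$-clause (in fact most are $2$-literal clauses of the form $y<y'$), so the total variable count stays $\Theta(k)$ and the clause count stays polynomial in $|H|$.

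With the construction in hand, the correctness argument is routine in both directions: from a row-clique $(1,c_1),\ldots,(k,c_k)$ we build the permutation placing $x_i$ at the rank dictated by $c_i$ (and the thresholds at their prescribed positions), and check that every non-edge clause is satisfied because the pair $(c_i,c_j)$ is an edge of $H$ for all $i\neq j$; conversely, any satisfying permutation first forces the scaffold into shape, hence each $x_i$ decodes to a well-defined column $c_i$, and the satisfied clauses guarantee that no chosen pair is a non-edge, so the chosen vertices form a row-clique. The running-time transfer is immediate: the reduction is polynomial and blows up the parameter by only a constant factor, so a $2^{o(k\log k)}\cdot n^{\Oh(1)}$ algorithm for $4$-\textsc{Permutation SAT} would yield one for $k\times k$-Clique, contradicting ETH. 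The main obstacle, as indicated, is purely the gadget design: arranging the scaffold so that each equality/disequality of a variable with a fixed rank costs at most two literals, thereby landing inside the allowed budget of $\alpha=2$ variables per literal and $\beta=4$ literals per clause.
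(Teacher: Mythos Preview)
Your approach is essentially the paper's: a chain of threshold variables creates $k$ slots, a selector variable per row is placed in a slot to name its column, and each non-edge yields the $4$-clause $(x_i<t_a)\vee(t_{a+1}<x_i)\vee(x_j<t_b)\vee(t_{b+1}<x_j)$; the paper writes exactly this with $x_1<\cdots<x_{k+1}$ as thresholds and $y_1,\ldots,y_k$ as selectors. Two small slips to clean up: you need $k{+}1$ thresholds (not $k$) to get $k$ slots, and your sentence that a permutation of $\{x_1,\ldots,x_k\}$ ``corresponds bijectively to a tuple $(c_1,\ldots,c_k)\in[k]^k$'' is false as written (distinct ranks force distinct columns) --- the correct encoding is precisely the threshold-slot one you describe afterwards, which is surjective onto $[k]^k$ but not injective, and that is all the reduction needs.
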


%\begin{proof}[Proof of \cref{thm:permSAT-hardness}]
%\end{proof}

To prove Theorem~\ref{thm:permSAT-hardness} we use the problem $\kclique$ considered by  Lokshtanov, Marx, and Saurabh~\cite{LMS18}. They showed that, unless ETH fails, this problem cannot be solved in $2^{o(k\log k)}$-time.
We first define $\kclique$ below and then reduce $\kclique$ to $\permsat$.
    %Under the ETH, the work of Lokshtanov, Marx, and Saurabh~\cite{LMS18} showed that there is no  $O(2^{o(k\log k)})$-time solution to determine of an input is a positive instance of $\kclique$.

    An instance of the $\kclique$ problem is an undirected graph $\Gc$ with the vertex set $\{1,\ldots,k\}\times\{1,\ldots,k\}$ (which we can represent as a grid). This graph $\Gc$ is a positive instance of $\kclique$ if there is one vertex from each \emph{row} of the grid that forms a $k$-clique, that is, a $k$-clique in which no two vertices share the same first component.
    %An instance of $\kclique$ is an undirected graph $\Gc$ which  consists of a $k^2$ many vertices of a graph where they are arranged in the form of a $k\times k$-grid. This graph $\Gc$ is a positive instance of $\kclique$ if there is one vertex from each \emph{row} of the grid that forms a $k$-clique.  

\begin{theorem}[{\cite[Theorem~2.4]{LMS18}}]\label{thm:kxk-clique-hardness}
 Assuming ETH, there is no $2^{o(k\log k)}$-time algorithm for $\kclique$.
\end{theorem}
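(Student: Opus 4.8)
The plan is to reduce from $\SATThree$ via $3$-Coloring, exploiting the standard grouping trick in which a single choice per row of the grid succinctly encodes the coloring of a whole block of vertices. First I would fix a convenient ETH-hard source: by the sparsification lemma together with the classical linear-size reduction from $\SATThree$ to $3$-Coloring, ETH implies that $3$-Coloring on $n$-vertex graphs with $\Oh(n)$ edges has no $2^{o(n)}$-time algorithm (see~\cite[Chapter~14]{platypus}). The reason to pass through $3$-Coloring rather than reduce from $\SATThree$ directly is that its constraints are \emph{binary} (they are edges), and a binary constraint touches at most two groups and can therefore be checked by a single adjacency in the grid graph; a ternary clause would have to be inspected across up to three groups at once, which a pairwise adjacency relation cannot express.

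Given a $3$-Coloring instance $H$ on $n$ vertices, I would take $k$ to be the least integer with $k\floor{\log_3 k}\ge n$, so that $k=\Theta(n/\log n)$ and hence $k\log k=\Theta(n)$, and I set $s\coloneqq\floor{\log_3 k}$. Partition $V(H)$ into $k$ groups $V_1,\ldots,V_k$, each of size at most $s$. Build the $\kclique$ grid as follows: the columns of row $i$ are indexed by the proper $3$-colorings of $H[V_i]$, of which there are at most $3^{s}\le k$; unused columns are padded with \emph{isolated} dummy vertices so that every row has exactly $k$ cells. For $i\ne j$, declare cell $(i,c)$ adjacent to cell $(j,c')$ exactly when the combined coloring $c\cup c'$ is proper on $H[V_i\cup V_j]$, i.e. no edge of $H$ between $V_i$ and $V_j$ is monochromatic; no edges are placed inside a row, and dummies stay isolated.

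Correctness rests on the observation that every edge of $H$ either lies inside one group (so it is handled by keeping only \emph{proper} colorings as legal columns) or joins two distinct groups (so it is handled by the adjacency rule), and no edge spans three groups. Hence a proper $3$-coloring $\chi$ of $H$ restricts to pairwise-compatible local colorings $\chi|_{V_i}$, which form a transversal $k$-clique; conversely any transversal $k$-clique selects one legal coloring $c_i$ per group, and since every edge is tested by some pair $(i,j)$, the union $\bigcup_i c_i$ is a proper $3$-coloring of all of $H$ (for $k\ge 2$ an isolated dummy can never belong to such a clique). Thus $H$ is $3$-colorable if and only if the grid admits the required clique, and the whole construction runs in time polynomial in the grid size $k^2=\Oh((n/\log n)^2)$.

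Composing the pieces, a hypothetical $2^{o(k\log k)}$-time algorithm for $\kclique$ would, through this polynomial-time reduction and the identity $k\log k=\Theta(n)$, decide $3$-Coloring in $2^{o(n)}$ time, contradicting ETH. The step I expect to demand the most care is the parameter bookkeeping: one must simultaneously ensure that the per-group colorings fit into the $k$ available columns (which is why $s=\floor{\log_3 k}$ forces $3^{s}\le k$) and that $k\log k$ stays linear in $n$ (which fixes the choice $k=\Theta(n/\log n)$); the padding must also avoid creating spurious cliques, which is exactly why the dummy vertices are made isolated.
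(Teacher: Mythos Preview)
The paper does not supply its own proof of this statement: Theorem~\ref{thm:kxk-clique-hardness} is quoted verbatim from Lokshtanov, Marx, and Saurabh~\cite{LMS18} and is used as a black box to derive Theorem~\ref{thm:permSAT-hardness}. Hence there is no in-paper proof to compare against.

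That said, your proposal is a faithful and correct reconstruction of the standard argument (essentially the one in~\cite{LMS18} and~\cite[Chapter~14]{platypus}): pass through $3$-Coloring to obtain binary constraints, group the $n$ vertices into $k=\Theta(n/\log n)$ blocks of size $\lfloor\log_3 k\rfloor$, let the columns of row $i$ enumerate the at most $3^{\lfloor\log_3 k\rfloor}\le k$ proper $3$-colorings of block $i$, and make two cells adjacent precisely when their partial colorings agree on cross edges. The parameter arithmetic $k\log k=\Theta(n)$ and the padding by isolated dummies are handled correctly, and the equivalence between transversal $k$-cliques and global proper $3$-colorings is exactly as you describe. So your write-up would serve as a valid self-contained proof, but it goes beyond what the present paper itself provides.
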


 \paragraph{The reduction.}  We now reduce $\kclique$ to $\permsat$. Suppose $\Gc$ is an instance of $\kclique$. We construct a $4$-formula $\phi_\Gc$ over variable set $X\coloneqq \{x_1,\dots,x_k,x_{k+1}, y_1,\dots,y_k\}$ as~follows.
 
%We construct the formula so that a permutation of the literals  is a positive instance if  $x_1<x_2<\dots<x_k<x_{k+1}$ and  $x_i<y_{i_1}<\dots<y_{i_\ell}<x_{i+1}$, if the selected $k$-clique contains vertices in column $i$ contains exactly elements in rows $i_1,\dots,i_\ell$. 

\begin{figure}[t]
    \centering
    \includegraphics[]{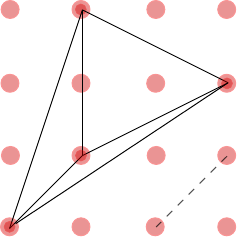}
%\makebox[\textwidth][c]{\input{onlytree.pdf_tex}}%
  \caption{The construction in Section~\ref{sec:permSAT}. The highlighted clique corresponds to permutation $x_1<y_4<x_2<y_1<y_3<x_3 < x_4<y_2<x_5$ (with $y_1$ and $y_3$ possibly swapped). The dashed non-edge $((4,3),(3,4))$ is disallowed by the clause $\neg\left((x_{4}< y_{3}<x_{5}) \land (x_3<y_4<x_{4})\right)$ which ensures if $y_4$ appears between $x_3$ and $x_4$, then $y_3$ does not appear between $x_4$ and $x_5$.
    \label{fig:kClique}}
\end{figure}

Recall that the vertices of the graph $\Gc$ are of the form $(i,j)$ for $i,j\in \{1,\dots, k\}$. 
We say that vertex $(i,j)$ is in the $i^{th}$ row and $j^{th}$ column. To construct $\phi_\Gc$, we first write the following $3k$ many $1$-clauses:
\begin{align*}
x_1<x_2,\quad x_2<x_3,\quad \ldots,\quad x_k< x_{k+1},\\
 x_1<y_1,\quad x_1<y_2,\quad \ldots,\quad x_1<y_k\\
 y_1<x_{k+1},\quad y_2<x_{k+1},\quad \ldots,\quad y_k<x_{k+1}
\end{align*}
The conjunction of these clauses ensures that in any permutation satisfying $\phi_{\Gc}$, the variables $x_1,\ldots,x_{k+1}$ are ordered exactly in this way, while variables $y_1,\ldots,y_k$ are sandwiched between $x_1$ and $x_{k+1}$. In other words, the $y$-variables that are placed between $x_j$ and $x_{j+1}$ indicate the rows that choose their clique vertices from the $j^{\text{th}}$ column; and for some j's, this set may be empty as well.

Next, we introduce clauses that restrict the placement of variables $y_1,\ldots,y_k$ within the chain $x_1<x_2<\ldots<x_{k+1}$. The intention is the following: placing $y_i$ between $x_j$ and $x_{j+1}$ corresponds to choosing the vertex $(i,j)$ to the clique. Hence, it remains to introduce clauses ensuring that vertices chosen in this way in consecutive rows are pairwise adjacent. To this end, for every pair $(a,b),(c,d)$ of vertices non-adjacent in $\Gc$, we construct the following $4$-clause:
%%Ant: Subscript corrected
$$( y_{a}<x_{b})\lor (x_{b+1}<y_a)\lor (y_c<x_d)\lor (x_{d+1}<y_c).$$
Note that logically, this $4$-clause is equivalent to the following:
$$\neg\left((x_{b}< y_{a}<x_{b+1}) \land (x_d<y_c<x_{d+1})\right).$$
Thus, intuitively speaking, the $4$-clause forbids simultaneously choosing $(a,b)$ and $(c,d)$ to the clique.

This concludes the construction of the formula $\phi_\Gc$. It remains to verify the correctness of the reduction.

 \begin{lemma}   
 The graph $\Gc$ admits a $k$-clique with one vertex from each row if and only if $\phi_\Gc$ is satisfiable.
 \end{lemma}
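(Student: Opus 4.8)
The plan is to prove both directions of the equivalence by making precise the correspondence between permutations satisfying $\phi_\Gc$ and $k$-cliques with one vertex per row, as sketched in the construction.

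First I would analyze the structure forced by the $3k$ one-clauses. The clauses $x_1 < x_2 < \ldots < x_{k+1}$ fix the relative order of the $x$-variables in any satisfying permutation, and the clauses $x_1 < y_i < x_{k+1}$ force each $y_i$ to lie strictly between $x_1$ and $x_{k+1}$. Hence in any satisfying permutation $\pi$, each variable $y_i$ falls strictly between some consecutive pair $x_{j}$ and $x_{j+1}$ for a unique $j = j(i) \in \{1,\dots,k\}$ (uniqueness because the $x$'s partition the interval $(\pi(x_1),\pi(x_{k+1}))$ into $k$ blocks, and no value can equal an $x$-value since $\pi$ is a permutation). Define the map $i \mapsto (i, j(i))$; I claim this picks one vertex from each row. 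That it picks a vertex from row $i$ is immediate; the key point is that the chosen vertices are pairwise adjacent, which follows because the $4$-clause for a non-adjacent pair $(a,b),(c,d)$ is logically equivalent to $\neg\big((x_b < y_a < x_{b+1}) \land (x_d < y_c < x_{d+1})\big)$, so if $\pi$ satisfies $\phi_\Gc$ then we cannot have $j(a) = b$ and $j(c) = d$ simultaneously for any non-edge. Thus $\{(i,j(i)) : i \in [k]\}$ is a clique with one vertex per row, proving the forward direction (satisfiable $\Rightarrow$ clique).

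For the converse, suppose $\Gc$ has a $k$-clique $\{(i, \sigma(i)) : i \in [k]\}$ with one vertex per row, where $\sigma : [k] \to [k]$ records the column chosen in each row ($\sigma$ need not be injective). I would construct an explicit permutation $\pi$ of $X$ realizing this: lay out the $k+1$ variables $x_1 < x_2 < \ldots < x_{k+1}$ on the line, and for each column index $j$, insert all $y_i$ with $\sigma(i) = j$ into the open interval between $x_j$ and $x_{j+1}$ (in any order among themselves — this is where the parenthetical "$y_1$ and $y_3$ possibly swapped" in the figure caption comes from). This $\pi$ satisfies all $3k$ one-clauses by construction. For a non-adjacent pair $(a,b),(c,d)$, since the clique contains no non-edge, we cannot have both $\sigma(a) = b$ and $\sigma(c) = d$; hence at least one of $y_a \notin (x_b, x_{b+1})$ or $y_c \notin (x_d, x_{d+1})$ holds in $\pi$, so the corresponding $4$-clause is satisfied. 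Therefore $\pi$ satisfies $\phi_\Gc$.

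I do not expect a serious obstacle here — the reduction is designed so that the correspondence is essentially bijective at the level of "which block each $y_i$ lands in." The one point requiring a little care is the uniqueness of the block $j(i)$ in the forward direction: it relies on the fact that a permutation assigns distinct integer values, so no $y$-value can coincide with an $x$-value and the $k$ half-open blocks genuinely partition the relevant integers; I would state this explicitly. A secondary minor point is to be careful that the construction and the one-clauses really do leave the $y$-variables completely free to be placed in any block (including empty blocks), which is what makes the "only if" direction go through without additional constraints. With these observations in place, combining the two directions yields the lemma.
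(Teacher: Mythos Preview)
Your proposal is correct and follows essentially the same argument as the paper: extract the column index $j(i)$ from the block in which $y_i$ lands to get the clique, and conversely insert each $y_i$ into the block determined by the clique's column choice to build a satisfying permutation. The only cosmetic difference is that you treat the two implications in the opposite order and are slightly more explicit about the uniqueness of $j(i)$ (and you label ``satisfiable $\Rightarrow$ clique'' as the forward direction, whereas the statement as written has the clique hypothesis first); mathematically the content is identical.
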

 \begin{proof}
 First, suppose $\Gc$ contains a $k$-clique $K = \{(1,b_1),\dots, (k,b_k)\}$.
 Consider any permutation $\pi$ of $X$ such that
 \begin{itemize}[nosep]
     \item $x_1<x_2<\dots<x_k<x_{k+1}$, and
     \item $x_{b_i}<y_i<x_{b_i+1}$, for all $j\in \{1,\ldots,k\}$.
 \end{itemize}
 (Note that $\pi$ is not defined uniquely, the relative placement of $y_i$ and $y_{i'}$ can be arbitrary whenever $b_i=b_{i'}$.) It can be easily seen that $K$ being a clique, implies that all clauses in $\phi_{\Gc}$ are satisfied. The $1$-clauses are satisfied trivially, while every $4$-clause constructed for a non-adjacent $(a,b),(c,d)$ is satisfied because $(a,b)$ and $(c,d)$ cannot simultaneously belong to~$K$.

 Suppose now that there is an ordering of $X$ that satisfies $\phi_\Gc$. Clearly, it must be the case that $x_1<x_2<\dots<x_k<x_{k+1}$. Further, for every $i\in \{1,\ldots,k\}$ we have $x_1<y_i<x_{k+1}$ and therefore, there exists $j_i$ such that $x_{j_i}<y_i<x_{j_i+1}$. We let $K \coloneqq \{(i,j_i) \colon i\in\{1,\ldots,k\}\}$; note that $K$ contains one vertex from each row.  We claim that $K$ is a clique in $\Gc$.  
    Indeed, since in $\phi_\Gc$ there is a clause disallowing that $\left((x_{b}< y_{a}<x_{b+1}) \land (x_d<y_c<x_{d+1})\right)$ whenever there is no edge between $(a,b)$ and $(c,d)$, all vertices of $K$ must be pairwise adjacent.    
\end{proof}

This concludes the proof of Theorem~\ref{thm:permSAT-hardness}.
We remark that establishing the complexity of $2$- and $3$-\textsc{Permutation SAT} remains an interesting and challenging
open problem. Eriksson in his MSc thesis~\cite{Eriksson} shows that $2$-\textsc{Permutation SAT} can be solved
in time $((k/2)!)^2 \cdot (k+n)^{\Oh(1)}$, which gives roughly  a $2^{k/2}$ multiplicative improvement over the naive algorithm.

For a broader context, we also remark that
a more general variant of \textsc{Permutation SAT} is \textsc{Permutation MaxSAT}, where we ask for an assignment
that satisfies as many constraints as possible (instead of asking to satisfy all of them). 
Observe that $(2,1)$-\textsc{Permutation SAT} is equivalent to a problem of checking if a given directed graph is acyclic
(and thus solvable in polynomial time) while $(2,1)$-\textsc{Permutation MaxSAT} is equivalent to finding a maximum
acyclic subdigraph (which is NP-hard).
A simple folklore dynamic programming algorithm solves $(2,1)$-\textsc{Permutation MaxSAT} in $2^{\Oh(k)} n^{\Oh(1)}$ time
and this algorithm can be generalised to $(3,1)$-\textsc{Permutation MaxSAT}~\cite{BodlaenderFKKT12}.
Kim and Gon\c{c}alves~\cite{KimG13} proved that $(4,1)$-\textsc{Permutation MaxSAT} 
admits no $2^{o(k \log k)} n^{\Oh(1)}$ algorithm unless the Exponential Time Hypothesis fails.

\section{Lower bound for Rabin games}
Finally, in this section, we prove the main result of this paper, stated as Theorem~\ref{thm:rabinLB} below.

\begin{theorem}~\label{thm:rabinLB}
    Assuming the Exponential Time Hypothesis, there is no algorithm that solves Rabin games with $n$ vertices and degree $k$ in time $2^{o(k\log k)}\cdot n^{\Oh(1)}$.   
\end{theorem}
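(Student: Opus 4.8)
The plan is to reduce $4$-\textsc{Permutation SAT} to the problem of deciding the winner of a Rabin game. Given a $(2,4)$-formula $\phi$ with variable set $\{x_1,\dots,x_k\}$ and clauses $C_1,\dots,C_n$, I will build in polynomial time a Rabin game $\mathcal{G}_\phi$ with $\Oh(n+k)$ vertices and a Rabin objective of degree exactly $k$ --- one pair $(G_i,B_i)$ per variable $x_i$ --- such that \Steven{} wins $\mathcal{G}_\phi$ if and only if $\phi$ is satisfiable. Using that \Steven{} is positional in Rabin games (recalled above), the idea is that a positional strategy of \Steven{} encodes, through his choices, a candidate linear order $\prec$ on $\{x_1,\dots,x_k\}$, while \Audrey{} is allowed to probe the clauses of $\phi$ against $\prec$. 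Since the reduction runs in polynomial time, a $2^{o(k\log k)}\cdot n^{\Oh(1)}$ algorithm for Rabin games would yield such an algorithm for $4$-\textsc{Permutation SAT}, contradicting Theorem~\ref{thm:permSAT-hardness} and hence the ETH.

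The arena is as follows. There is a hub $h$ owned by \Audrey{}; from $h$ she may move to a clause vertex $K_C$, owned by \Steven{}, for any clause $C$. At $K_C$, \Steven{} selects a literal occurrence $\ell = (x_a<x_b)$ of $C$ and the token goes to a vertex $f^C_\ell$ associated with this occurrence, which is placed in $G_a$ and whose unique outgoing edge enters a global ``order gadget'' at its vertex $u_b$. The order gadget consists of vertices $u_1,\dots,u_k$, each owned by \Steven{}; $u_i$ is placed in $B_i$, and from $u_i$ there are edges to every $u_j$ and to a sink $\mathsf{End}$, whose only edge returns to $h$; all vertices other than the $f^C_\ell$'s and the $u_i$'s belong to no $G_i$ and no $B_i$. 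Thus a positional strategy of \Steven{} fixes, for each variable, a ``successor'' in $\{x_1,\dots,x_k\}\cup\{\mathsf{End}\}$, and if these successors are the successors of a genuine linear order $\prec$, the token launched at $u_b$ follows a finite deterministic walk through exactly the variables $\succeq b$ (in increasing order) and returns to $h$. The key feature of the colouring is that the cycle traversed while \Steven{} ``defends'' a literal $x_a<x_b$ of $C$ always meets $G_a$ (at $f^C_\ell$) and meets $B_a$ if and only if the walk starting at $u_b$ reaches $u_a$; hence this cycle is won via pair $a$ precisely when $a\prec b$.

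For correctness, suppose first that $\phi$ is satisfied by a permutation $\pi$. \Steven{} plays the successor function of $\pi$ inside the order gadget and at each $K_C$ defends a literal $x_a<x_b$ of $C$ with $\pi(a)<\pi(b)$. For an arbitrary play, let $\mathcal{C}$ be the nonempty set of clauses visited infinitely often, and let $C^\star\in\mathcal{C}$ minimise $\pi(b)$ over the right-hand variables of the literals defended in $\mathcal{C}$; writing $x_{a^\star}<x_{b^\star}$ for the literal defended at $C^\star$, the infinitely visited set $\infSeq$ meets $G_{a^\star}$ (at $f^{C^\star}_{\ell}$) and is disjoint from $B_{a^\star}$, because $\pi(a^\star)<\pi(b^\star)$ places $a^\star$ strictly before the start of every walk and hence outside all of them, so $u_{a^\star}\notin\infSeq$. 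Thus \Steven{} wins. Conversely, assume $\phi$ is unsatisfiable and fix any positional strategy of \Steven{}; the edges $(a_C,b_C)$ determined by the literals he defends form a digraph on $\{x_1,\dots,x_k\}$ that, were it acyclic, would extend to a linear order satisfying every clause --- impossible --- so it contains a directed cycle $x_{v_1}\to x_{v_2}\to\dots\to x_{v_m}\to x_{v_1}$. Now \Audrey{} forever cycles through the $m$ clauses supplying these edges. If some walk never terminates, $\infSeq$ stays inside the order gadget and meets no $G_i$; otherwise $\infSeq$ contains $u_{v_{i+1}}$ for every $i$ (the start vertex of the $i$-th walk, indices modulo $m$), hence meets $B_{v_i}$ for all $i$, while the only $G_i$ it can meet are those with $i\in\{v_1,\dots,v_m\}$. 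Either way no Rabin pair is satisfied, so \Steven{} loses.

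The step I expect to require the most care is exactly making the objective a Rabin objective of degree only $k$ --- rather than $k!$ --- while keeping it sound even though $\infSeq$ of a play blends the ``forbid'' vertices $f^C_\ell$ and the walk traces produced by many different clause-probes. The construction above circumvents this through two points: when \Steven{} genuinely encodes a permutation $\pi$, a single pair --- the one indexed by the left-hand variable of the infinitely-recurrent clause whose right-hand variable is $\prec$-least --- always witnesses his win regardless of which clauses \Audrey{} probes; and \Steven{} can never profit from an ill-formed (forest-like or partly cyclic) successor function, since \Audrey{} punishes it by cycling through a directed cycle in the digraph of defended literals. The remaining items --- verifying that $\mathcal{G}_\phi$ has $\Oh(n+k)$ vertices and degree $k$, is constructed in polynomial time, and then assembling the lower bound from Theorem~\ref{thm:permSAT-hardness} --- are routine.
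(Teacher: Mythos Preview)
Your reduction is correct, but it is more elaborate than necessary and differs from the paper's construction in an instructive way. You make \Steven{} do two things positionally: pick a literal per clause \emph{and} encode the permutation itself through a successor function inside an ``order gadget'' $u_1,\dots,u_k$. The paper dispenses with the gadget entirely: it has only the hub, the clause vertices, and one vertex $[x_i<x_j]$ per ordered pair, with the edge $[x_i<x_j]\to\Delta$ closing the loop immediately. The whole Rabin condition is carried by the literal vertices themselves, via $G_i=\{[x_j<x_i]:j\neq i\}$ and $B_i=\{[x_i<x_j]:j\neq i\}$; the witnessing pair in the satisfiable case is the index of the $\pi$-\emph{maximum} variable occurring among the infinitely visited literals (rather than your $a^\star$ for the clause with $\pi$-minimum right-hand side). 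The unsatisfiable case is identical to yours: a cycle among the defended literals lets \Audrey{} cycle through those clauses, and every index on the cycle is hit both as a left-hand and a right-hand side.

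What each approach buys: the paper's construction is strictly simpler---\Steven{} never has to commit to an order, only to one literal per clause, and the Rabin pairs detect consistency of these choices automatically. Your gadget makes the permutation explicit in the strategy, which is conceptually pleasant and does keep the degree at exactly $k$, but at the cost of $\Theta(k^2)$ extra edges and an additional case split (terminating vs.\ non-terminating walks) in the correctness argument. Both yield the same lower bound via Theorem~\ref{thm:permSAT-hardness}.
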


As mentioned earlier, we reduce from $\permsat$.

\paragraph{The reduction.}
    Let $\phi=C_1\land C_2\land \dots\land C_m$ be an instance of $\permsat$ over $k$ variables $\{y_1,\dots,y_k\}$, where $C_1,\ldots,C_m$ are $4$-clauses. We construct an instance of $\rabingame$ such that, in this instance, there is a strategy for \Steven~iff $\phi$ is satisfiable. 
    \begin{figure}[ht]%{l}{0.4\textwidth}
    \centering
    \includegraphics[]{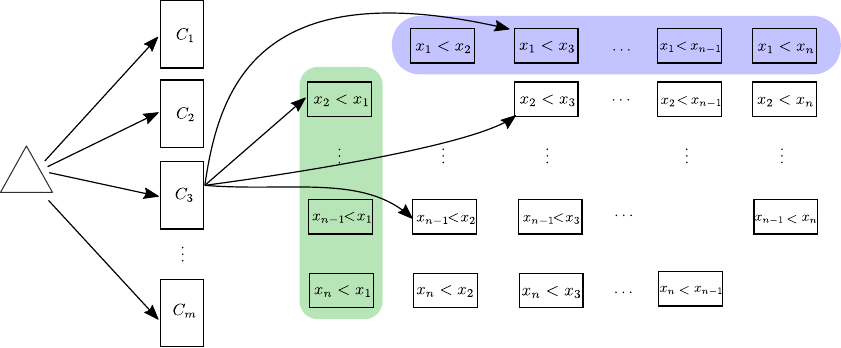}
%\makebox[\textwidth][c]{\input{onlytree.pdf_tex}}%
  \caption{Part of the constructed game graph $D$. The clause $C_3$ is $(x_1<x_3)\lor(x_2<x_1)\lor (x_2<x_3)\lor (x_{n-1}<x_2)$.
    Vertices of $G_1$ are highlighted in green and vertices of $B_1$ are highlighted in blue.
    \label{fig:psatRabin}}
\end{figure}

   We first define the game graph $D$; see Figure~\ref{fig:psatRabin}.
   There is an initial vertex $\Delta$, as well as vertices $[C_1],\dots,[C_m]$, one for each of the $m$ $4$-clauses in $\phi$. Further, for each possible literal $x_i<x_j$, where $i,j\in \{1,\ldots,k\}$ and $i\neq j$, there is a vertex $[x_i < x_j]$. Vertex $\Delta$ belongs to \Audrey, while all other vertices belong to \Steven.
   
   The intention is that whenever \Audrey{} moves the token currently placed at $\Delta$, she chooses a clause that she wishes to see satisfied. To facilitate this, we add edges 
   $\Delta\rightarrow [C_\ell]$ for all $\ell\in \{1,\ldots,m\}$. Once the token is at a vertex $[C_\ell]$, Steven needs to respond with a literal present in $C_i$; the intention is for it to be a true literal in $C_i$. Therefore, for every clause $C_\ell$ and literal $x_i<x_j$ present in $C_\ell$, we add the edge $[C_\ell] \rightarrow [x_i<x_j]$. Finally, to allow \Audrey{} checking further clauses, we add edges back to $\Delta$: for every literal $x_i<x_j$, there is an edge 
   $[x_i<x_j]\rightarrow \Delta$.

   Next, we define subset pairs constituting the winning condition.
  For each $i\in \{1,\ldots,k\}$, we set $$G_i = \{[x_j<x_i]\colon j\in\{1,\ldots,k\}\setminus \{i\}\}\qquad \text{and} \qquad B_i = \{[x_i<x_j]\colon j\in \{1,\ldots,k\}\setminus\{i\}\}.$$ 

  Before we proceed to the formal verification of the correctness of the reduction, let us give some intuition.
    It is easy to see that every third turn, the token is placed at vertex $\Delta$. At each such moment, turn Audrey chooses to move the token to any vertex corresponding to a clause $C_\ell$, with the intention of challenging Steven about the satisfaction of $C_\ell$. Then Steven has to declare the literal that satisfies $C_\ell$. If Steven tries to ``cheat'' by picking literals that cannot be extended to a full ordering of the variables, then the winning condition is designed in such a way that the play will be losing for him. 
    Consider the illustration in Figure~\ref{fig:psatRabin}, where for an instance $\phi$ of $\permsat$ which consists of $m$ clauses such that the clause $C_3$ is $(x_1<x_4)\lor(x_2<x_1)\lor (x_2<x_3)\lor (x_{n-1}<x_2)$.
    The vertices in $G_1$ are highlighted in green and the vertices in $B_1$ are highlighted in blue.

\begin{lemma}
        The instance $\phi$ of $\permsat$ is satisfiable if and only if Steven has a winning strategy in the constructed Rabin game. 
\end{lemma}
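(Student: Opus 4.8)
The plan is to prove both implications of the equivalence, matching satisfying permutations of $\phi$ with winning positional strategies for \Steven{} in the constructed Rabin game.

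\textbf{From a satisfying permutation to a winning strategy.} Suppose $\pi$ is a permutation of $\{y_1,\dots,y_k\}$ (equivalently, a linear order on the variables, which I will think of as values in $[k]$) satisfying $\phi$. I would define a positional strategy for \Steven{} as follows: whenever the token reaches a clause vertex $[C_\ell]$, \Steven{} picks some literal $x_i<x_j$ occurring in $C_\ell$ that is \emph{true} under $\pi$ — such a literal exists because $\pi$ satisfies $C_\ell$ — and moves to $[x_i<x_j]$; from any literal vertex the only move is back to $\Delta$. I would then analyze an arbitrary play $\rho$ consistent with this strategy. The play cycles through $\Delta$, some clause vertex, some literal vertex, back to $\Delta$, and so on. The literal vertices visited infinitely often all correspond to literals true under $\pi$. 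Now I claim \Steven{} wins via the Rabin pair indexed by $i^\ast$, where $x_{i^\ast}$ is the variable assigned the \emph{largest} value by $\pi$ (the maximum of the order). Indeed, $B_{i^\ast}=\{[x_{i^\ast}<x_j]\}$ can never be visited, since no literal $x_{i^\ast}<x_j$ is true under $\pi$ (as $x_{i^\ast}$ is maximal); so $B_{i^\ast}$ is visited zero times. For $G_{i^\ast}=\{[x_j<x_{i^\ast}]\}$: I need to argue some such vertex is visited infinitely often. This is the one slightly delicate point — it is not automatic that \Audrey{} ever challenges a clause forcing a literal of the form $x_j < x_{i^\ast}$. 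To handle this cleanly I would instead pick $i^\ast$ more carefully, or argue as follows: among all indices $i$ such that $[x_j<x_i]$ is visited infinitely often for some $j$, take the one with $\pi$-largest $x_i$; then symmetrically $B_i$ is visited only finitely often, because any literal $x_i<x_\ell$ visited infinitely often would, since $x_\ell >_\pi x_i$, put $\ell$ into the same set with a larger value, contradicting maximality. One must also check the base case where no literal vertex at all is visited infinitely often is impossible, which holds because every play visits some literal vertex every three turns, and there are finitely many of them, so at least one recurs; hence the index set over which we maximize is nonempty. This gives a winning Rabin pair, completing this direction.

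\textbf{From a winning strategy to a satisfying permutation.} By the positional determinacy lemma quoted in the preliminaries, if \Steven{} wins he has a positional winning strategy $\sigma$. For each clause vertex $[C_\ell]$, let $\ell \mapsto (x_{i_\ell} < x_{j_\ell})$ record the literal vertex $\sigma$ moves to. Consider the directed graph $H$ on vertex set $\{x_1,\dots,x_k\}$ whose arcs are exactly the pairs $x_{i_\ell} \to x_{j_\ell}$ chosen by $\sigma$ (meaning $x_{i_\ell}$ should be smaller). I claim $H$ is acyclic. If $H$ had a cycle $x_{p_1} \to x_{p_2} \to \dots \to x_{p_r} \to x_{p_1}$, then \Audrey{} can drive the token around the corresponding loop: from $\Delta$ go to the clause $[C_\ell]$ whose $\sigma$-response is $[x_{p_1}<x_{p_2}]$, then $\sigma$ goes to that literal vertex, then back to $\Delta$, then to the clause responding $[x_{p_2}<x_{p_3}]$, and so on cyclically. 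In the resulting play, the set of literal vertices visited infinitely often is exactly $\{[x_{p_1}<x_{p_2}],\dots,[x_{p_r}<x_{p_1}]\}$. For every index $i$, this set contains a vertex of $G_i$ iff $x_i$ appears as a ``head'' $x_{p_{t+1}}$, and it contains a vertex of $B_i$ iff $x_i$ appears as a ``tail'' $x_{p_t}$; since the $x_{p_t}$ form a cycle, every vertex that is a head is also a tail, so for every $i$, $G_i$ is hit infinitely often $\Rightarrow$ $B_i$ is hit infinitely often, meaning no Rabin pair is satisfied and \Audrey{} wins — contradicting that $\sigma$ is winning. Hence $H$ is a DAG; extend it to a total order, i.e.\ a permutation $\pi$ of the variables. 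Finally I verify $\pi$ satisfies every clause $C_\ell$: the literal $x_{i_\ell}<x_{j_\ell}$ that $\sigma$ selects at $[C_\ell]$ is a literal of $C_\ell$, and by construction of $H$ and $\pi$ we have $\pi(x_{i_\ell}) < \pi(x_{j_\ell})$, so that literal is true and $C_\ell$ is satisfied. Therefore $\phi$ is satisfiable.

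\textbf{Main obstacle.} The forward direction's ``$G_{i^\ast}$ is visited infinitely often'' step is the subtle part: one must choose the witnessing Rabin index as a $\pi$-maximal index among those whose ``incoming'' literal vertices recur, rather than the global $\pi$-maximum variable, and separately argue the set of recurrent literal vertices is nonempty (every three moves the token sits at a literal vertex, and there are finitely many). Everything else is a direct structural check; the reverse direction is essentially the observation that a positional strategy of \Steven{} that \Audrey{} cannot beat must induce an acyclic ``precedence'' relation on the variables, which is exactly a consistent partial order, extendable to the desired permutation. Once the lemma is proved, Theorem~\ref{thm:rabinLB} follows immediately: the reduction is polynomial, produces a Rabin game on $\Oh(k^2 + m)$ vertices with degree $k$, so a $2^{o(k\log k)}\cdot n^{\Oh(1)}$ algorithm for Rabin games would yield one for $4$-\textsc{Permutation SAT}, contradicting Theorem~\ref{thm:permSAT-hardness} under ETH.
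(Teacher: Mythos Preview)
Your proposal is correct and follows essentially the same approach as the paper: the same positional strategy for \Steven{} from a satisfying permutation, the same choice of witnessing Rabin index as the $\pi$-maximal variable among those appearing in the recurrent literals (your formulation via right-hand sides is equivalent, since every recurrent literal is $\pi$-true), and the same cycle argument for the converse. Your reverse direction is organized as the contrapositive of the paper's (you prove ``\Steven{} wins $\Rightarrow$ $\phi$ satisfiable'' via acyclicity of $H$ and topological extension, whereas the paper argues ``$\phi$ unsatisfiable $\Rightarrow$ \Audrey{} beats every positional strategy''), but the content is identical; if anything, you make explicit the step the paper leaves implicit, namely that a cycle-free set of selected literals yields a satisfying permutation.
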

\begin{proof}
    First suppose $\phi$ is satisfiable, consider a satisfying permutation $\pi$. This gives rise to a (positional) winning strategy for \Steven: For each vertex $[C_\ell]$, \Steven~picks the edge leading to the vertex $[x_i<x_j]$ corresponding to any literal of $C_\ell$ that is satisfied in $\pi$. Consider now any infinite play $\rho$ where Steven obeys this strategy. Let $L$ be the set of literals visited infinitely often by $\rho$, and let $i_\mymax$ be such that $x_{i_\mymax}$ is the variable that is the largest in $\pi$ among variables appearing in the literals of $L$. We argue that $\rho$ satisfies the constructed Rabin condition with the index $i_\mymax$ as a witness. This is because   %for some $i$, and never visits $[x_{i_\mymax}<x_i]$ for any $i$. % and 
    %Thus 
     $L$ intersects $G_{i_\mymax}$ as $\rho$ visits  $[x_i<x_{i_\mymax}]$ infinitely often for some $i$, while the intersection of $L$ with $B_{i_\mymax}$ is empty, as $\rho$ never visits any vertex $[x_{i_\mymax}<x_i]$ for any $i$.
     
    Suppose now $\phi$ is not satisfiable. Then we need to show that \Audrey~can win against any positional strategy of \Steven.
    Indeed, consider a fixed positional strategy of Steven: 
    for each Steven vertex $[C_\ell]$ the strategy picks an edge $[C_\ell]\to [x_{a_\ell}<x_{b_\ell}]$ for some literal $x_{a_\ell}<x_{b_\ell}$ appearing in $C_\ell$.
    Since $\phi$ is not satisfiable, the set $\left\{x_{a_\ell}<x_{b_\ell} \colon \ell\in[m]\right\}$ of all selected literals has a cycle. That is, there are variables $x_{c_1},\dots,x_{c_p}$ such that literals $x_{c_1}<x_{c_2}, x_{c_2}<x_{c_3}, \ldots,x_{c_{p-1}}<x_{c_p}, x_{c_p}<x_{c_1}$ are among those selected by Steven's strategy. Observe now that for the fixed Steven's positional strategy, \Audrey{} may set up a counter strategy that repeatedly visits each of the vertices $[c_1<c_2],[c_2<c_3],\ldots,[c_{p-1}<c_p],[c_p<c_1]$ in a cycle, so that these are exactly the literal vertices visited infinitely often in the play. Then this play does not satisfy the constructed Rabin condition, since for each $i\in \{1,\ldots,k\}$, the set of vertices occurring infinitely often either intersects both $B_{i}$ and $G_{i}$ (if $i\in \{c_1,\ldots,c_p\}$), or is disjoint with both $B_i$ and $G_i$ (if $i\notin \{c_1,\ldots,c_p\}$). Hence, Audrey may win against any fixed positional strategy of Steven.
\end{proof}

Using the reductions shown in the preliminaries, we obtain similar corollaries for Muller and generalised parity objectives.

\begin{corollary}
Assuming the Exponential Time Hypothesis, there is no algorithm that solves Muller games with $n$ vertices and $k$ colours in time $2^{o(k\log k)}\cdot n^{\Oh(1)}$.
\end{corollary}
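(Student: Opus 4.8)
The plan is to derive this corollary directly from Theorem~\ref{thm:rabinLB} via the encoding of Rabin objectives as Muller objectives described in the preliminaries. Recall that a Rabin objective of degree $k$ on an arena $D$ translates into a Muller objective on the same arena $D$ using $2k$ colours: each vertex $v$ receives the colour set $c(v) = \{2i \colon v \in G_i\} \cup \{2i+1 \colon v \in B_i\}$, and the winning family is $\Fc = \{C \subseteq [2k] \colon \exists i\; 2i \in C \text{ and } 2i+1 \notin C\}$. First I would observe that this transformation is correct in the sense that \Steven{} wins the Rabin game if and only if he wins the resulting Muller game: a play $\rho$ visits $G_i$ infinitely often and $B_i$ only finitely often exactly when the set of colours seen infinitely often along $\rho$ contains $2i$ but not $2i+1$, which is precisely the condition for membership in $\Fc$.

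Next I would note that this transformation is computable in polynomial time and, crucially, that it changes the parameters only by a constant factor: the arena $D$ is unchanged (so the vertex count stays $n$), and the number of colours becomes $2k = \Oh(k)$. Therefore an algorithm solving Muller games with $n$ vertices and $k'$ colours in time $2^{o(k' \log k')} \cdot n^{\Oh(1)}$ would, when run on the Muller game produced from a degree-$k$ Rabin game, solve the Rabin game in time $2^{o(2k \log(2k))} \cdot n^{\Oh(1)} = 2^{o(k \log k)} \cdot n^{\Oh(1)}$, since $2k\log(2k) = \Oh(k \log k)$ and the little-$o$ is preserved under this linear change of variable.

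Putting these together: such a Muller-game algorithm combined with the polynomial-time reduction would contradict Theorem~\ref{thm:rabinLB}, hence contradict the ETH. I do not anticipate a genuine obstacle here; the only point requiring a small amount of care is the bookkeeping of the asymptotic notation under the substitution $k' = 2k$, to confirm that $2^{o(k'\log k')}$ indeed collapses to $2^{o(k\log k)}$ after the reduction. One could alternatively bypass even that by simply noting that the \emph{construction} in Section~\ref{sec:permSAT-hardness}'s proof of Theorem~\ref{thm:rabinLB} already yields a Muller game of the desired form directly, but invoking the black-box reduction from the preliminaries is cleaner and is the approach I would present.
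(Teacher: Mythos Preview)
Your proposal is correct and follows essentially the same route as the paper: the paper simply states that the corollary follows ``using the reductions shown in the preliminaries,'' which is precisely the Rabin-to-Muller encoding with $2k$ colours that you spell out. The only minor slip is the reference to a nonexistent ``Section~\ref{sec:permSAT-hardness}'' in your alternative remark; the construction you mean lives in the section proving Theorem~\ref{thm:rabinLB}.
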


\begin{corollary}
Assuming the Exponential Time Hypothesis, there is no algorithm that solves $d$-dimensional $3$-parity games with $n$ vertices in time $2^{o(d\log d)}\cdot n^{\Oh(1)}$.
\end{corollary}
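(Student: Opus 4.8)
The plan is to reduce from $4$-\textsc{Permutation SAT}, whose hardness under ETH is guaranteed by Theorem~\ref{thm:permSAT-hardness}. Given a $4$-formula $\phi = C_1 \wedge \dots \wedge C_m$ over variables $\{y_1,\dots,y_k\}$, I would build a Rabin game of degree $k$ on $\Oh(k^2 + m)$ vertices. The core idea is to let a permutation $\pi$ of the variables correspond to a positional strategy of \Steven, and to encode ``$\pi$ satisfies $\phi$'' as ``the strategy is winning''. Since Rabin games are positional for \Steven{} (the lemma cited from~\cite{EJ88,EJ99}), it suffices to reason only about positional strategies when \Steven{} loses, which is exactly what makes the reduction clean.

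Here is the construction I would carry out, in order. First, vertices: one \Audrey-owned hub $\Delta$; one \Steven-owned vertex $[C_\ell]$ per clause; and one \Steven-owned vertex $[x_i < x_j]$ for each ordered pair $i \neq j$ (these represent the possible $2$-literals, since every literal occurring in a $4$-clause is of this form). Edges: $\Delta \to [C_\ell]$ for all $\ell$ (Audrey challenges a clause); $[C_\ell] \to [x_i<x_j]$ whenever the literal $x_i<x_j$ appears in $C_\ell$ (Steven names a literal witnessing satisfaction); and $[x_i<x_j] \to \Delta$ for all pairs (return to the hub). Second, the winning condition: for each $i \in [k]$ set $G_i = \{[x_j < x_i] : j \neq i\}$ and $B_i = \{[x_i < x_j] : j \neq i\}$ — that is, index $i$'s good set collects all literals in which $x_i$ is the \emph{larger} variable and its bad set collects all literals in which $x_i$ is the \emph{smaller} variable. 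The play structure is $3$-periodic, hitting $\Delta$ every third turn.

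For correctness I would prove both directions. Forward: from a satisfying permutation $\pi$, define the positional strategy that at $[C_\ell]$ picks some literal of $C_\ell$ true under $\pi$. In any resulting play, let $L$ be the set of literal-vertices visited infinitely often and let $x_{i_{\max}}$ be the $\pi$-largest variable occurring among literals of $L$. Then $L$ meets $G_{i_{\max}}$ (the literal placing $x_{i_{\max}}$ on the larger side is eventually followed and revisited), but $L \cap B_{i_{\max}} = \varnothing$, because a literal in $B_{i_{\max}}$ would put some variable above $x_{i_{\max}}$, contradicting maximality; hence \Steven{} wins with witness $i_{\max}$. Backward: suppose $\phi$ is unsatisfiable. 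Fix any positional \Steven{} strategy, which selects one literal $x_{a_\ell}<x_{b_\ell}$ per clause; the selected set of literals, viewed as a digraph on $\{y_1,\dots,y_k\}$, cannot be acyclic (an acyclic orientation yields a topological order satisfying every clause), so it contains a directed cycle $x_{c_1} < x_{c_2} < \dots < x_{c_p} < x_{c_1}$. \Audrey{} then plays so that exactly these cyclically selected literal-vertices recur infinitely often. For each index $i$: if $i \in \{c_1,\dots,c_p\}$ then the $\infSeq$ set meets both $G_i$ and $B_i$ (the cycle uses $x_i$ on both sides), and if $i \notin \{c_1,\dots,c_p\}$ it meets neither; so no Rabin pair is satisfied and \Audrey{} wins. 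Since \Steven{} has no winning positional strategy, by positionality he has no winning strategy at all. The degree is $k$ and the vertex count polynomial, so a $2^{o(k\log k)} \cdot n^{\Oh(1)}$ algorithm would violate ETH via Theorem~\ref{thm:permSAT-hardness}.

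I expect the main subtlety — not so much an obstacle as the point requiring care — to be the backward direction, specifically the claim that \Audrey{} can force \emph{exactly} the chosen cyclic literal-vertices to be the infinitely-recurring ones. The digraph whose vertices are $\{x_{c_1}<x_{c_2}, \dots, x_{c_p}<x_{c_1}\}$ and whose arcs encode ``can follow in a play'' is strongly connected through $\Delta$: after any literal vertex the token returns to $\Delta$, from which \Audrey{} can steer to whichever clause $[C_{\ell}]$ has the next desired literal as \Steven's fixed response. Thus \Audrey{} cycles through these $p$ clause/literal pairs forever, and the $\infSeq$ set of the play is precisely $\{\Delta\} \cup \{[C_{\ell}] : \ell \text{ used}\} \cup \{[x_{c_t}<x_{c_{t+1}}] : t\}$; since clause vertices and $\Delta$ lie in no $G_i$ or $B_i$, only the literal vertices matter, and the cycle property kills every Rabin pair as described. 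The rest is routine bookkeeping.
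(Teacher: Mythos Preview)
Your argument is a faithful reconstruction of the paper's proof of Theorem~\ref{thm:rabinLB}, the lower bound for \emph{Rabin games}, and that part is correct and matches the paper essentially verbatim. But the statement you are asked to prove is about $d$-dimensional $3$-parity games, and your proposal never makes the passage from the Rabin game you built to a generalised parity game. Your final sentence concludes hardness for Rabin games of degree $k$; the corollary is not about those.

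The missing step is precisely the content of the corollary: one must invoke (or reprove) the polynomial-time reduction from a Rabin game of degree $k$ to a $k$-dimensional $3$-parity game given in the preliminaries, namely assigning to each vertex $v$ the colour vector whose $i$th coordinate is $3$ if $v\in B_i$, $2$ if $v\in G_i\setminus B_i$, and $1$ otherwise. Composing this with your Rabin construction yields a $k$-dimensional $3$-parity game on the same vertex set in which \Steven{} wins iff $\phi$ is satisfiable; the lower bound in the parameter $d=k$ then follows. This is exactly how the paper derives the corollary---Theorem~\ref{thm:rabinLB} plus the Rabin-to-generalised-parity reduction---so once you add that sentence your proof is complete and identical in approach.
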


We conclude by remarking that we can also extend our result to $2$-dimensional $k$-parity games. Indeed, consider the following assignment of colours to the same game graph $D$: for each vertex of the form $[x_j<x_i]$, we assign the two-dimensional colour $(2j+1, 2i)$. The correctness of this reduction is similar to that for Rabin games presented above, hence we leave the verification to the reader.
\begin{corollary}
Assuming the Exponential Time Hypothesis, there is no algorithm that solves $2$-dimensional $k$-parity games with $n$ vertices in time $2^{o(k\log k)}\cdot n^{\Oh(1)}$.
\end{corollary}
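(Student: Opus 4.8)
The plan is to fill in the verification that the remark above leaves to the reader. Given a $4$-\textsc{Permutation SAT} instance $\phi$ on variables $x_1,\dots,x_k$, I keep \emph{verbatim} the arena $D$ built in the proof of Theorem~\ref{thm:rabinLB} --- the vertex $\Delta$ owned by \Audrey{}, the clause vertices $[C_\ell]$ and literal vertices $[x_i<x_j]$ owned by \Steven{}, and the same edges --- and I replace the Rabin objective by the $2$-dimensional parity objective that colours each literal vertex $[x_j<x_i]$ with $(2j+1,\,2i)$, and colours $\Delta$ and every $[C_\ell]$ with a colour that is never dominant, say $(1,1)$. All colours lie in $\{1,\dots,2k+1\}$, so an algorithm solving $2$-dimensional $k'$-parity games in time $2^{o(k'\log k')}\cdot n^{\Oh(1)}$ would, composed with this reduction, give a $2^{o(k\log k)}\cdot n^{\Oh(1)}$ algorithm for $4$-\textsc{Permutation SAT}, contradicting Theorem~\ref{thm:permSAT-hardness} and hence ETH. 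It therefore suffices to prove that \Steven{} wins the constructed parity game if and only if $\phi$ is satisfiable.

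Both implications reuse the strategies from the proof of Theorem~\ref{thm:rabinLB}. If $\phi$ is satisfied by a permutation $\pi$, \Steven{} plays the same positional strategy as there: from $[C_\ell]$ he moves to a literal vertex for some literal of $C_\ell$ true under $\pi$. On any play respecting this strategy, the set $L$ of literal vertices visited infinitely often contains only literals true under $\pi$, hence is acyclic as a set of order constraints; as in the Rabin argument I would consider the $\pi$-largest variable $x_{i_{\max}}$ occurring in $L$, observe that it occurs in $L$ only as the larger variable of a literal, and deduce from this that one of the two coordinates of the colouring is, infinitely often, dominated by an even value, so \Steven{} wins. Conversely, since parity games are in particular Rabin games, they are positional for \Steven{}; so when $\phi$ is unsatisfiable it is enough to beat every positional strategy of \Steven{}. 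The literals selected by such a strategy contain a cycle $x_{c_1}<x_{c_2}<\dots<x_{c_p}<x_{c_1}$, and, exactly as in the proof of Theorem~\ref{thm:rabinLB}, \Audrey{} can force a play whose infinitely-often-visited literal vertices are precisely $[x_{c_1}<x_{c_2}],\dots,[x_{c_p}<x_{c_1}]$; on that play every index $c_t$ occurs both as a smaller and as a larger variable of a cycle literal, and I would argue that this makes the largest colour seen infinitely often odd in \emph{both} coordinates, so the play is losing for \Steven{}.

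The step I expect to be the most delicate --- and the only one genuinely new compared with the Rabin, Muller, and $d$-dimensional $3$-parity cases --- is the collapse into exactly two coordinates. The generic encoding of a Rabin condition with $k$ pairs produces a $k$-dimensional parity game; here one instead exploits that each literal vertex $[x_j<x_i]$ lies in exactly one good set, $G_i$, and exactly one bad set, $B_j$, so that the two variables of a literal can be routed to the two coordinates. What must be checked with care is that the colours of $\Delta$ and of the clause vertices are always dominated, that on an acyclic play one coordinate is genuinely even-dominated (the analogue of $L\cap B_{i_{\max}}=\emptyset$ in the Rabin proof), and that on a cyclic play neither coordinate is even-dominated (the analogue of the Rabin proof's case distinction according to whether an index belongs to $\{c_1,\dots,c_p\}$). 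Should the colouring $(2j+1,2i)$ as stated not already produce exactly this behaviour --- for instance if a cyclic play needs to be detected through one of its index-\emph{decreasing} literals --- one refines it so that the parity a literal contributes to each coordinate also records whether its two variables occur in increasing index order, leaving the arena $D$ unchanged.
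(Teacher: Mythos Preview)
Your proposal follows the paper's remark verbatim --- same arena $D$, literal vertex $[x_j{<}x_i]$ coloured $(2j{+}1,2i)$, neutral colour $(1,1)$ on $\Delta$ and the clause vertices --- so at the level of the construction there is nothing to compare. The gap is in the unsatisfiable direction. With this colouring the second coordinate of \emph{every} literal vertex is the even number $2i$, while the remaining vertices contribute only $1$; hence on \emph{any} infinite play the largest second-coordinate colour seen infinitely often is even, and \Steven{} wins regardless of \Audrey{}'s moves. In particular, on the cyclic play through $[x_{c_1}{<}x_{c_2}],\ldots,[x_{c_p}{<}x_{c_1}]$ the second-coordinate colours are $2c_2,\ldots,2c_1$, all even, so your claim that ``the largest colour seen infinitely often [is] odd in both coordinates'' is simply false. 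The reduction as written sends every $4$-\textsc{Permutation SAT} instance to a game won by \Steven{} and therefore proves nothing; this appears to be an oversight in the paper's remark as well.

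Your third paragraph senses the problem and proposes to refine the colouring so that its parity also records whether a literal is index-increasing, but you neither write the colouring down nor verify it, so this is not a proof. And the refinement is not routine: the condition you are trying to capture is ``the set $L$ of infinitely visited literals has a sink'', and on the satisfiable instance $\phi=(x_1{<}x_3)\wedge(x_3{<}x_2)$ \Steven{}'s only positional strategy yields $L=\{[x_1{<}x_3],[x_3{<}x_2]\}$, one index-increasing and one index-decreasing literal sharing the top index $3$ --- precisely the pattern your refinement would use to detect a $2$-cycle such as $\{[x_1{<}x_3],[x_3{<}x_1]\}$. A correct reduction to $2$-dimensional parity needs an explicit colouring together with a complete verification, neither of which your proposal supplies.
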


\paragraph*{Acknowledgements.} A large part of the results presented in this paper were obtained during Autobóz 2023, an annual research camp on automata theory. The authors thank the organisers and participants of Autobóz for creating a wonderful research atmosphere.

\bibliographystyle{alpha}
\bibliography{games}

\end{document}